
%
%
%
%

\documentclass[envcountsame]{svmult}



\usepackage{graphicx}        
\usepackage{multicol}        




\usepackage{amssymb,amsfonts,amsmath,amsbsy}


\usepackage{rotate,ifthen,latexsym,umlaut}

\usepackage[notref,notcite,final]{showkeys}


\begin{document}

\frontmatter




\mainmatter




\def\epsi{\varepsilon}
\def\D{\mathrm{d}}
\def\I{\mathrm{i}}
\def\E{\mathrm{e}}

\def\re{\mathrm{Re}}
\def\tr{\mathrm{tr}}
\def\Tr{\mathrm{Tr}}

\def\p{{\bf P}}             
\def\T{\mathbb{T}}
\def\A{\mathcal{A}}

\def\N{\mathbb{N}}
\def\R{\mathbb{R}}
\def\C{\mathbb{C}}

\renewcommand{\theequation}{\thesection.\arabic{equation}}


\def\1{\mathbf{1}}        
\def\s{\eta}            
\def\Hf{\mathcal{H}_{\mathrm{f}}}      

\def\Hi{\mathcal{H}}
\def\B{\mathcal{B}}
\def\U{\mathcal{U}}
\def\F{\mathcal{F}}
\def\Or{\mathcal{O}}
\def\ph{\varphi}

\def\({\left(}
\def\){\right)}


\title*{Motion of electrons in adiabatically perturbed
periodic structures}
\titlerunning{Periodic structures}

\toctitle{Motion of electrons in adiabatically\protect\newline
  perturbed periodic structures}
\author{Gianluca Panati\inst{1}\and Herbert Spohn\inst{1} \and
Stefan Teufel\inst{2}}
\authorrunning{G. Panati, H. Spohn, and S. Teufel}
\tocauthor{Gianluca Panati, Herbert Spohn, and Stefan Teufel }
\institute{Zentrum Mathematik, TU M\"unchen, Boltzmannstr. 3,
 85747 Garching \and
Mathematisches Institut, Universit\"{a}t T\"{u}bingen, Auf der Morgenstelle 10,
 72076 T\"{u}bingen\\[3mm]
\texttt{panati@ma.tum.de}, \texttt{spohn@ma.tum.de},
\texttt{stefan.teufel@uni-tuebingen.de}}
%
%
\maketitle \numberwithin{theorem}{section} \numberwithin{equation}{section}
\numberwithin{figure}{section}


\begin{abstract}
We study the motion of electrons in a periodic background potential (usually resulting
from a crystalline solid). For small velocities one would use either the non-magnetic or the magnetic Bloch hamiltonian, while in the relativistic regime one would use the Dirac equation with a periodic potential. The dynamics, with the background potential included, is perturbed either through slowly varying external electromagnetic potentials or through a slow deformation of the crystal. In either case we discuss how the Hilbert space of states decouples into almost invariant subspaces and explain the effective dynamics within such a subspace.
\end{abstract}


\section{Introduction}\label{PST_sec.1}
\setcounter{equation}{0}

In a crystalline solid the conduction electrons move in the potential created by the
ions and the core electrons. Somewhat mysteriously and linked to the Pauli exclusion
principle, the Coulomb repulsion between conduction electrons may be ignored, within
a good approximation. Thereby one arrives at a fundamental model of solid state
physics, namely an ideal Fermi gas of electrons subject to a periodic crystal
potential. Let $\Gamma$ be the periodicity lattice. It is a Bravais lattice and
generated through the basis $\{\gamma_1,\gamma_2,\gamma_3\}$,
$\gamma_j\in\mathbb{R}^3$, as
\begin{equation}\label{PST_1.1}
\Gamma=\{\gamma=\sum^3_{j=1}\alpha_j\gamma_j\quad \textrm{with
}\alpha\in\mathbb{Z}^3\}\,.
\end{equation}
The crystal potential $V_\Gamma$ is then $\Gamma$-periodic, i.e.,
$V_\Gamma:\mathbb{R}^3\to\mathbb{R}$ and
$V_\Gamma(x+\gamma)=V_\Gamma(x)$ for all $\gamma\in\Gamma$, and the
electrons are governed by the one-particle hamiltonian
\begin{equation}\label{PST_1.2}
H_{\mathrm{SB}}=-\frac{1}{2}\Delta_x+V_\Gamma\,.
\end{equation}
$H_{\mathrm{SB}}$ is the (Schr\"{o}dinger)--Bloch hamiltonian. A wave function
$\psi_t\in L^2(\mathbb{R}^3)$ evolves in time according to the Schr\"{o}dinger
equation
\begin{equation}\label{PST_1.3}
i\frac{\partial}{\partial t}\psi_t=H_{\mathrm{SB}}\psi_t\,.
\end{equation}
We have chosen units such that the mass of an electron
$m_\mathrm{e}=1$ and $\hbar=1$. The electron charge, $e$, is
absorbed in $V_\Gamma$. Since $V_\Gamma$ is periodic, electrons move
ballistically with an effective dispersion relation given by the
Bloch energy bands $E_n$, see below for a precise definition. $E_n$
is periodic with respect to the lattice $\Gamma^\ast$ dual to
$\Gamma$, $E_n(k+\gamma^\ast)=E_n(k)$ for all
$\gamma^\ast\in\Gamma^\ast$, $k\in\mathbb{R}^3$. This feature makes
the dynamical properties of a Bloch electron very different from a
massive particle with dispersion
$E_{\textrm{free}}(k)=\frac{1}{2}k^2$ valid in case $V_\Gamma=0$.

The thermodynamics of the electron gas is studied taking
$H_\mathrm{SB}$ as a starting point. Dynamically, however, one wants
to probe the response of the electrons to external forces which very
crudely come
in two varieties.\medskip\\
(i) \textit{External electromagnetic potentials}. Electrostatic
potentials manufactured in a lab have a slow variation on the scale
of the lattice $\Gamma$. Therefore we set
$V_\mathrm{ext}(x)=e\phi(\varepsilon x)$, $e$ the charge of the
electron, with $\varepsilon$ a dimensionless parameter and $\phi$
independent of $\varepsilon$. $\varepsilon\ll 1$ means that the
potential $V_\mathrm{ext}$ has a slow variation when measured with
respect to the lattice spacing of $\Gamma$. Note that the
electrostatic force is $\mathcal{O}(\varepsilon)$ and thus weak.
External magnetic fields on the other hand can be so strong that the
radius of gyration is comparable to the lattice spacing. It then
makes sense to split the vector potential as $A_0+A_\mathrm{ext}$,
where $A_0(x)=-\frac{1}{2}B_0 \wedge x$ with
$B_0\in\mathbb{R}^3$ a constant magnetic field. Included in
$H_\mathrm{SB}$, this yields the magnetic Bloch hamiltonian
\begin{equation}\label{PST_1.4}
H_\mathrm{MB}=\frac{1}{2}(-i\nabla_x- A_0)^2+V_\Gamma\,.
\end{equation}
$A_\mathrm{ext}$ is a probing vector potential in addition to $A_0$.
$A_\mathrm{ext}$ is slowly varying on the scale of the lattice,
$A_\mathrm{ext}(x)=A(\varepsilon x)$ with $A$ independent of
$\varepsilon$, and the corresponding magnetic field is small of
order $\varepsilon$. Including all electromagnetic potentials,
for simplicity with the electric charge absorbed into $A$ and $\phi$, the
hamiltonian becomes
\begin{equation}\label{PST_1.5}
H=\frac{1}{2}\big(-i\nabla_x-A_0(x)-A(\varepsilon x)\big)^2
+V_\Gamma(x)+\phi(\varepsilon x)\,.
\medskip
\end{equation}
(ii) \textit{Mechanical forces}. The crystal lattice can be deformed through
external pressure and shear. Thereby an electric polarization is induced, an effect
which is known as piezoelectricity. If charges are allowed to flow, in this way
mechanical pressure can be transformed into electric currents. The mechanical forces
are time-dependent but slow on the typical time-scale of the electrons. Therefore in
(\ref{PST_1.2}) $V_\Gamma(x)$ is replaced by $V_{\Gamma(\varepsilon
t)}(x,\varepsilon t)$. $\Gamma(\varepsilon t)$ is the instantaneous periodicity
lattice and is defined as in (\ref{PST_1.1}). $V_{\Gamma(t)}$ is space-periodic,
i.e.~$V_{\Gamma(t)}(x+\gamma,t)=V(x,t)$ for all $\gamma\in\Gamma(t)$. The special
case of a time-independent lattice, $\Gamma(t)=\Gamma$, but a still slowly in time
varying crystal potential is also of interest. For example, one may imagine a unit
cell with two nuclei. If the two nuclei are displaced relative to each other, then
$\Gamma$ remains fixed while the crystal potential in the unit cell changes with
time. The resulting piezoelectric hamiltonian reads
\begin{equation}\label{PST_1.6}
H_\mathrm{PE}(t)=-\frac{1}{2}\Delta_x+V_{\Gamma(\varepsilon
t)}(x,\varepsilon t)\,.
\end{equation}\medskip

Our general goal is to understand, in each case, the structure of the solution of
the time-dependent Schr\"{o}dinger equation for small $\varepsilon$. Obviously, $H$
in (\ref{PST_1.5}) is a space-adiabatic problem, while (\ref{PST_1.6}) corresponds
to a time-adiabatic problem. However in the latter case it turns out to be
profitable to transform to a time-independent lattice, say $\Gamma(0)$. Then also
terms varying slowly in space are generated. Thus, in the general case the full
power of the space-adiabatic perturbation theory \cite{1,2} will be needed. A word
of caution must be issued here for the magnetic Bloch hamiltonian. To use the
methods from \cite{1} in this context, the unperturbed Hamiltonian must be periodic,
which is the case only if the magnetic flux per unit cell is rational. One can then
define an enlarged magnetic unit cell such that $H_\mathrm{MB}$ is invariant with
respect to magnetic translations. If the magnetic flux is not rational, the crutch
is to include in $A_0$ a nearby rational flux part of the magnetic field, with a
small denominator, and to treat the remainder as $A_\mathrm{ext}$.

To achieve our goal, depending on the context we use one of the periodic
hamiltonians as backbone. The periodic hamiltonian is denoted by $H_\mathrm{per}$
with  $H_\mathrm{per}$ either $H_\mathrm{SB}$, or  $H_\mathrm{MB}$, or
$H_\mathrm{PE}$ at fixed $t$, or $H_\mathrm{LS}$ from (\ref{PST_1.8}), or
$H_\mathrm{DB}$ from (\ref{PST_1.9}). As explained below, the Hilbert space
$\mathcal{H}=L^2(\mathbb{R}^3)$ then splits as
$\mathcal{H}=\bigoplus^\infty_{n=0}\mathcal{H}_n$, where $n$ is the band index. Each
subspace $\mathcal{H}_n$ is invariant under $\exp[-itH_\mathrm{per}]$ and
$H_\mathrm{per}$ restricted to $\mathcal{H}_n$ is unitarily equivalent to
multiplication by $E_n(k)$. $E_n(k)$ is the effective hamiltonian associated to the
$n$-th band. The complexity of the full problem has been reduced substantially,
since only a single band dynamics has still to be studied. Modifying
$H_\mathrm{per}$ such that it becomes slowly varying in space-time is, vaguely
speaking, a small perturbation. Thus one would expect that the invariant subspace
$\mathcal{H}_n$ is to be substituted by a slightly tilted subspace. On this subspace
$E_n(k)$ will turn into a more complicated effective hamiltonian. The difficulty is
that, while the dynamics generated by $H_\mathrm{per}$ can be computed by solving a
purely spectral problem, none of the perturbed hamiltonians can be understood in
this way. In particular, one has to spell out carefully over which time scale the
slightly tilted subspace associated to $\mathcal{H}_n$ remains approximately
invariant and in what sense the dynamics generated by the effective hamiltonian
approximates the true time evolution.

To lowest order the effective hamiltonian can be guessed from
elementary considerations and belongs to a standard tool of solid
state physics \cite{3}. The guess provides however little hint on
the validity of the approximation. There one needs a mathematical
theorem which states precise conditions on the initial wave function
and provides an error bound, from which the time scale for validity
can be read off.

Under the header ``geometric phase'' physicists and quantum chemists
have realized over the past twenty years, say, that the first order
correction to the effective hamiltonian carries a lot of interesting
physics, see \cite{4} for a recent comprehensive overview. For the
magnetic Bloch hamiltonian the first order correction yields a Hall
current proportional to the Chern number of the magnetic Bloch
vector bundle. Similarly, the modern theory of piezoelectricity,
expresses the piezocurrent as an integral of the Berry connection
over the Brillouin zone, see King-Smith, Vanderbilt \cite{5} and
Resta \cite{6}. First order effective Hamiltonians are no longer
guessed so easily and it is convenient to have the systematic scheme
\cite{1} available.

In nature electrons are spin $\frac{1}{2}$ particles. The wave function is thus
$\mathbb{C}^2$-valued and the hamiltonian in (\ref{PST_1.5}) is modified to
\begin{equation}\label{PST_1.7}
H=\frac{1}{2}\big(-i\nabla_x-A_0(x)-A(\varepsilon x)\big)^2
+V_\Gamma(x)+\phi(\varepsilon
x)-\frac{1}{2}\sigma\cdot\big(B_0+\varepsilon B(\varepsilon x)\big)
\end{equation}
with $B=\nabla\wedge A$ for the slowly varying part of the magnetic
field. Here $\sigma=(\sigma_1,\sigma_2,\sigma_3)$ is the 3-vector of
Pauli spin matrices. Besides the term proportional to the uniform
magnetic field $B_0$, $H$ acquires a subleading term of order
$\varepsilon$. More accurately one may want to include the
spin-orbit coupling. The periodic piece of the hamiltonian reads
then
\begin{equation}\label{PST_1.8}
    H_\mathrm{LS}=-\frac{1}{2}\Delta_x+V_\Gamma(x)+\frac{1}{4}\sigma
    \cdot\big(\nabla V_\Gamma(x)\wedge(-i\nabla_x)\big)
\end{equation}
and the slowly varying potential is added as in (\ref{PST_1.7}) with the additional
subleading term $\varepsilon\frac{1}{4}\sigma\cdot (\nabla \phi(\varepsilon
x)\wedge(-i\nabla_x))$.

Depending on the crystalline solid, the conduction electrons can
move so fast that relativistic corrections become important. On the
one-particle level an obvious choice is then the Dirac equation with
a periodic potential $V_\Gamma$. Wave functions are
$\mathbb{C}^4$-valued and the hamiltonian reads
\begin{equation}\label{PST_1.9}
H_\mathrm{DB}=\beta m_\mathrm{e}c^2+c\alpha\cdot p+V_\Gamma\,,\quad
p=-i\nabla_x\,.
\end{equation}
We introduced here the mass, $m_\mathrm{e}$, of the electron and the
speed of light, $c$. The $4\times 4$ matrices
$\beta,\alpha_1,\alpha_2,\alpha_3$ are standard and defined in
\cite{7,8}, for example. Note that the Lorentz frame is fixed by the
solid, i.e.~by $V_\Gamma$.

In fact, the non-relativistic limit for $H_\mathrm{DB}$ yields the spin-orbit
hamiltonian $H_\mathrm{LS}$ \cite{7,8}. If $\|V_\Gamma\|$ is bounded, for
sufficiently large $c$, the Dirac hamiltonian $H_\mathrm{DB}$ has a spectral gap,
which widens as $c\to\infty$. Projecting onto the electron subspace, to leading
order in $1/c$ one obtains the Pauli-Bloch hamiltonian $-(1/2
m_\mathrm{e})\Delta_x+V_\Gamma$ with the spin-orbit coupling in (\ref{PST_1.8}) as a
correction of strength $1/(m_\mathrm{e}c)^2$. In addition the crystal potential is
corrected by $-\Delta_xV_\Gamma(x)/8(m_\mathrm{e}c)^2$.

In our contribution we will provide some background on how to establish, including
error bounds, the validity of the approximate dynamics as generated by an effective
hamiltonian, including order $\varepsilon$ corrections, for most of the models
mentioned in the introduction. For this purpose it is necessary to briefly recall
the spectral theory for the periodic hamiltonian, which is done in the following
section. In the subsequent sections we deal with particular cases in more detail. We
start with the non-magnetic Bloch hamiltonian, see (\ref{PST_1.5}) with $B_0=0$. For
the magnetic Bloch hamiltonian we explain how $B_0\to 0$ and $B_0\to\infty$ may be
viewed as particular adiabatic limits.
Piezoelectricity is discussed in the last section.\smallskip\\
\textbf{Remark}. Our contribution is one part of the research
project jointly with S.~Bauer and M.~Kunze within the Schwerpunkt.
Their part will be covered in \cite{9}. Both contributions appear
now as almost disjoint, which only reflects that we wanted to
present a coherent story. The unifying aspect is an adiabatic limit
for wave-type evolution equations. In this contribution we stay on
the level of effective hamiltonians while in \cite{9} one pushes the
scheme to the first dissipative correction.

\section{The periodic hamiltonians}\label{PST_sec.2}
\setcounter{equation}{0}

We consider a general dimension, $d$, and assume that the
periodicity lattice $\Gamma$ is represented as
\begin{equation}\label{PST_2.1a}
 \Gamma =\Big\{ x\in\mathbb{R}^d: x=
\textstyle{\sum_{j=1}^d}\alpha_j\,\gamma_j \,\,\,\mbox{for
some}\,\,\alpha \in \mathbb{Z}^d \Big\}\,,
\end{equation}
 where
$\{\gamma_1,\ldots,\gamma_d \}$ are vectors spanning $\mathbb{R}^d$.
We denote by $\Gamma^*$ the dual lattice of $\Gamma$ with respect to
the standard inner product in $\mathbb{R}^d$, i.e.~the lattice
generated by the dual basis $\{\gamma_1^*,\ldots,\gamma_d^*\}$
determined through the conditions $\gamma_i^* \cdot \gamma_j = 2\pi
\delta_{ij}$, $i,j\in\{1,\ldots,d\}$. The centered fundamental
domain $M$ of $\Gamma$ is defined by
\begin{equation}\label{PST_2.2}
 M = \Big\{ x\in\mathbb{R}^d: x=
\textstyle{\sum_{j=1}^d}\beta_j\,\gamma_j
\,\,\,\mbox{for}\,\,\beta_j\in
[-\textstyle{\frac{1}{2},\frac{1}{2}}]
 \Big\}\,,
\end{equation}
 and analogously the centered fundamental domain $M^\ast$ of $\Gamma^\ast$.  The set $M^\ast$ is
 the {\em first Brillouin zone} in the physics parlance.

\medskip

\noindent \textbf{Assumption 1.}\ \textit{The crystal potential
 $V_\Gamma:\mathbb{R}^d\to\mathbb{R}$ satisfies $V_\Gamma(x+\gamma)=V_\Gamma(x)$
 for all $\gamma\in\Gamma$, $x\in\mathbb{R}^d$. $V_\Gamma$ is infinitesimally
 bounded with respect to $-\Delta $.}

\medskip

 It follows from Assumption~1 that the periodic hamiltonians discussed below are self-adjoint
 on the domain of $-\Delta$.

\subsection{The Bloch hamiltonian}\label{PST_sec.2a}

We consider
\begin{equation}\label{PST_2a.1a}
H=-\frac{1}{2}\Delta+V_\Gamma\,.
\end{equation}
The periodicity of $H$ is exploited through the Bloch-Floquet-Zak
transform, or just the Zak transform for sake of brevity \cite{10}.
The advantage of such a variant is that the fiber at $k$ of the
transformed Hamiltonian operator has a domain which does not depend
on $k$.

The Zak transform is defined as
\begin{equation}\label{PST_2a.1}
(\mathcal{U}_\mathrm{Z}\psi)(k,x):=\sum_{\gamma\in\Gamma}
\mathrm{e}^{-\mathrm{i}k \cdot (x+\gamma)} \psi(x+\gamma)\,, \qquad
(k,x)\in\mathbb{R}^{2d},
\end{equation}
initially for a fast-decreasing function $\psi\in\mathcal{S}(\mathbb{R}^d)$. One
directly reads off from (\ref{PST_2a.1}) the following periodicity properties
\begin{equation}\label{PST_2a.2}
\big(\mathcal{U}_\mathrm{Z}\psi\big) (k, y+\gamma) = \big(
\mathcal{U}_\mathrm{Z}\psi\big) (k,y)\quad \mbox{ for all} \quad
\gamma\in\Gamma\,,
\end{equation}
\begin{equation}\label{PST_2a.3}
\big(\mathcal{U}_\mathrm{Z}\psi\big) (k+\lambda, y) =
\mathrm{e}^{-\mathrm{i}y\cdot\lambda}\,\big(\mathcal{U}_\mathrm{Z}\psi\big)
(k,y) \quad\mbox{ for all} \quad \lambda\in\Gamma^*\,.
\end{equation}
From (\ref{PST_2a.2}) it follows that, for any fixed $k\in{\mathbb{R}^d}$,
$\big(\mathcal{U}_\mathrm{Z}\psi \big)(k,\cdot)$ is a $\Gamma$-periodic function and
can thus be regarded as an element of $\mathcal{H}_\mathrm{f} = L^2(M)$.
$M=\mathbb{R}^d/\Gamma$ and it has the topology of the $d$-dimensional torus
$\mathbb{T}^d$. On the other side, Equation (\ref{PST_2a.3}) involves a unitary
representation of the group of lattice translations on $\Gamma^*$ (isomorphic to
$\Gamma^*$ and denoted as $\Lambda$), given by
\begin{equation} \label{PST_2a.4}
\tau:\Lambda \to \mathcal{U}(\mathcal{H}_\mathrm{f})\,,\quad\lambda
\mapsto \tau(\lambda)\,, \quad \big(\tau(\lambda)\varphi \big)(y) =
\mathrm{e}^{\mathrm{i}y\cdot\lambda}\varphi(y).
\end{equation}
It is then convenient to introduce the Hilbert space
\begin{eqnarray}\label{PST_2a.5}
&&\hspace{-40pt}\mathcal{H}_\tau =\big\{ \psi\in L^2_{\rm
loc}(\mathbb{R}^d, \mathcal{H}_\mathrm{f}):\,\, \psi(k - \lambda) =
\tau(\lambda)\,\psi(k) \qquad \textrm{for all } \lambda \in \Lambda
\big\}\nonumber\\
&&\hspace{-18pt}= L^2_\tau(\mathbb{R}^d\,,\;\mathcal{H}_\mathrm{f})\,,
\end{eqnarray}
equipped with the inner product
\begin{equation}\label{PST_2a.6}
\langle \psi,\,\varphi\rangle_{\mathcal{H}_\tau} = \int_{M^{*}} dk\,
\langle \psi(k),\,\varphi(k)\rangle_{\mathcal{H}_\mathrm{f}}\,.
\end{equation}
Obviously, there is a natural isomorphism  between $\mathcal{H}_\tau$ and
$L^2(M^{*}, \mathcal{H}_\mathrm{f})$ given by restriction from $\mathbb{R}^d$ to
$M^{*}$, and with inverse given by $\tau$-equivariant continuation, as suggested by
(\ref{PST_2a.3}). Equipped with these definitions, one checks that the map in
(\ref{PST_2a.1}) extends to a unitary operator
\begin{equation}\label{PST_2a.7}
\mathcal{U}_\mathrm{Z}: L^2(\mathbb{R}^d)\to \mathcal{H}_\tau \cong
L^2(M^{*}, L^2 (M)),
\end{equation}
with inverse given by
\begin{equation}\label{PST_2a.8}
(\mathcal{U}_\mathrm{Z}^{-1}\varphi)(x) = \int_{M^*} dk \,
\mathrm{e}^{\mathrm{i} k \cdot x} \varphi(k, [x]),
\end{equation}
where $[\, \cdot \, ]$ refers to  the a.e. unique decomposition $x =
\gamma_x + [x]$, with $\gamma_x \in \Gamma$ and $[x] \in M$.

As already mentioned, the advantage of this construction is that the transformed
hamiltonian is a fibered operator over $M^*$. Indeed, for the Zak transform of the
hamiltonian operator (\ref{PST_2a.1a}) one finds
\begin{equation}\label{PST_2a.9}
\mathcal{U}_\mathrm{Z}H \mathcal{U}_\mathrm{Z}^{-1} =
\int_{M^{*}}^\oplus dk\,H(k)
\end{equation}
with fiber operator
\begin{equation}\label{PST_2a.10}
H(k) = \frac{1}{2}\big( -{\mathrm{i}} \nabla_y + k\big)^2 +
V_\Gamma(y)\,,  \quad k\in M^{*} \,.
\end{equation}
By Assumption 1, for fixed $k\in M^{*}$, the operator $H(k)$ acts on $L^2(M)$ with
the Sobolev space $H^2(M)$ as domain independently of $k\in M^{*}$. Each fiber
operator $H(k)$ has pure point spectrum accumulating at infinity. For definiteness
the eigenvalues are enumerated according to their magnitude $E_0(k) \leq E_1(k) \leq
E_2(k) \leq \ldots$ and repeated according to their multiplicity. $E_n:M^\ast\to
\mathbb{R}$ is the $n$-th energy band function. It is continuous on $M^\ast$ when
viewed as a $d$-torus. Generically the eigenvalues $E_n(k)$ are non-degenerate. Of
course, there may be particular points in $k$-space where particular energy bands
touch each other and the corresponding eigenvalue becomes degenerate. The normalized
eigenfunction corresponding to $E_n(k)$ is the Bloch function and denoted by
$\varphi_n(k)\in H^2(M)$. It is determined only up to a $k$-dependent phase factor.
A further arbitrariness comes from points where energy bands touch. We denote by
$P_n(k)$ the projection along $\varphi_n(k)$ and set
\begin{equation}\label{PST_2a.11}
P_n=\int^\oplus_{M^\ast}dk P(k)\,,\qquad
\mathcal{H}_n=P_nL^2(\mathbb{R}^d)\,.
\end{equation}

Through the Zak transform we have achieved the product structure
\begin{equation}\label{PST_2a.12}
\mathcal{H}=\mathcal{H}_\mathrm{s}\otimes\mathcal{H}_\mathrm{f}\,,\qquad
\mathcal{H}_\mathrm{s}=L^2(M^\ast)\,,\;\mathcal{H}_\mathrm{f}=L^2(M)\,.
\end{equation}
$\psi\in\mathcal{H}_\mathrm{n}$ is of the form
$\phi(k)\varphi_n(k,y)$. The band index $n$ fixes the local pattern
of the wave function $\psi$ while $\phi(k)$ provides the slow
variation. Therefore $L^2(M^\ast)$ is the Hilbert space of the slow
degrees of freedom. On the other hand for fixed $k$, one has
oscillations in time determined by the eigenvalues $E_n(k)$. On long
time scales, these become fast oscillations. Therefore
$\mathcal{H}_\mathrm{f}=L^2(M)$ is the Hilbert space of the fast
degrees of freedom.

Since $[P_n,H]=0$, the subspaces $\mathcal{H}_n$ are invariant under
$\mathrm{e}^{-\mathrm{i}Ht}$. $P_n\mathrm{e}^{-\mathrm{i}Ht}P_n$ is
unitarily equivalent to multiplication by
$\mathrm{e}^{-\mathrm{i}E_n(k)t}$ on $L^2(M^\ast)$. Note that, in
general, $\mathcal{H}_n$ is not a spectral subspace for $H$. The
band functions generically have overlapping ranges. Therefore, if
slowly varying terms are added to the hamiltonian, the dynamics can
no longer be captured so easily by a spectral analysis of the
perturbed hamiltonian.

\subsection{The magnetic Bloch hamiltonian}\label{PST_sec.2b}

We consider $d=3$. The hamiltonian reads
\begin{equation}\label{PST_2b.1}
H=\frac{1}{2}\big(-i\nabla_x-A(x)\big)^2+V_\Gamma(x)\,,\quad
x\in\mathbb{R}^3\,,
\end{equation}
with $A(x)=-\frac{1}{2}B\wedge x$, $B\in\mathbb{R}^3$. Physically
the most relevant case is $d=2$. It is included here by setting
$x=(x_1,x_2,0)$ and $B=(0,0,B_0)$. Following Zak \cite{11}, see also
\cite{12}, one introduces the magnetic translations
\begin{equation} \label{PST_Magnetic translations}
(T_\alpha\psi)(x)=\big(e^{-i\alpha\cdot(-i\nabla_x+A(x))}\psi\big)(x)=
e^{i\alpha A(x)}\psi(x-\alpha)
\end{equation}
with $\alpha\in\mathbb{R}^3$. They satisfy the Weyl relations
\begin{equation}\label{PST_2b.3}
T_\alpha T_\beta=e^{-\frac{i}{2}B\cdot(\alpha\wedge\beta)}
T_{\alpha+\beta}=e^{-i B\cdot(\alpha\wedge\beta)}T_\beta T_\alpha\,.
\end{equation}
To have a commuting subfamily we need

\medskip

\noindent \textbf{Assumption 2.}\ \textit{The magnetic field $B$ is such that
$B\cdot(\gamma\wedge\gamma')\in 2\pi\mathbb{Q}$ for all $\gamma,\gamma'\in\Gamma$.}

\medskip

In the two-dimensional case our assumption requires that the
magnetic flux per unit cell, $B_0\cdot(\gamma_1\wedge\gamma_2)$, is
a rational multiple of $2\pi$.

Under the Assumption 2 there exists a sublattice $\Gamma_0\subset\Gamma$ such that
$B\cdot(\gamma\wedge\gamma')\in 2\pi\mathbb{Z}$ for every
$\gamma,\gamma'\in\Gamma_0$. $\Gamma_0$ is not unique. The set
$\{T_\alpha\}_{\alpha\in\Gamma_0}$ is a family of commuting operators, which commute
with $H$. Since $T_\alpha T_\beta=\pm T_{\alpha+\beta}$, the magnetic translations
still form only a projective group. It becomes a group by an even smaller sublattice
$\Gamma_1\subset \Gamma_0$ such that $B\cdot(\gamma\wedge\gamma')\in 4\pi\mathbb{Z}$
for all $\gamma,\gamma'\in\Gamma_1$. Another common choice is a further modification
of the phase through
\begin{equation}\label{PST_2b.3bis}
\mathcal{T}_\alpha=e^{-\frac{i}{2}\varphi(\alpha)}T_\alpha
\end{equation}
with
$\varphi(\alpha)=B_1\alpha_2\alpha_3+B_3\alpha_1\alpha_2-B_2\alpha_1\alpha_3$.
Then
$\mathcal{T}_\alpha\mathcal{T}_\beta=\mathcal{T}_{\alpha+\beta}$ for
all $\alpha,\beta\in\Gamma_0$.

We can now proceed as in the non-magnetic case. The Zak transform
becomes
\begin{equation}\label{PST_2b.4}
(\mathcal{U}_\mathrm{Z}\psi)(k,x)=\sum_{\gamma\in\Gamma_0}e^{-ik\cdot(x+\gamma)}
\mathcal{T}_\gamma\psi(x)\,,\quad (k,x)\in\mathbb{R}^6\,.
\end{equation}
The properties of $\mathcal{U}_\mathrm{Z}\psi$ are as in (\ref{PST_2.2}),
(\ref{PST_2a.1a}) provided $\Gamma$ is replaced by $\Gamma_0$, and
$\mathcal{H}_\tau$ is replaced by $\mathcal{H}^B_\tau=\{u\in
L^2_\mathrm{loc}(\mathbb{R}^d,\mathcal{H}_\mathrm{f}): $ (\ref{PST_2b.7}) below
holds true$\}$. In particular, $H$ of (\ref{PST_2b.1}) admits the fiber
decomposition
\begin{equation}\label{PST_2b.5}
\mathcal{U}_\mathrm{Z}H\mathcal{U}_\mathrm{Z}^{-1}=\int^\oplus_{M^\ast}dk
H(k)
\end{equation}
with $M^\ast$ the first Brillouin zone of $\Gamma^\ast_0$ and with the
fiber operator
\begin{equation}\label{PST_2b.6}
H(k)=\frac{1}{2}(-i\nabla_y+\frac{1}{2}B\wedge y+k)^2+V_\Gamma(y)\,.
\end{equation}
The domain of $H(k)$ is independent of $k$ but, in contrast to $H(k)$ from
(\ref{PST_2a.10}), a function $u$ in the domain has to satisfy the more complicated
boundary condition
\begin{equation}\label{PST_2b.7}
e^{-\frac{i}{2} y\cdot(\alpha\wedge B)}u(y-\alpha)=u(y)\,.
\end{equation}

\subsection{Dirac hamiltonian, spin-orbit coupling}\label{PST_sec.2c}

The Dirac hamiltonian with periodic potential reads
\begin{equation}\label{PST_2c.1}
H=\beta-i \alpha\cdot\nabla_x+V_\Gamma(x)\,,
\end{equation}
where we have set $m_\mathrm{e}=1$, $c=1$. As for the Bloch
hamiltonian, $H$ admits the fiber decompositon
\begin{equation}\label{PST_2c.2}
H=\int^\oplus_{M^\ast} dk H(k)
\end{equation}
with fiber hamiltonian
\begin{equation}\label{PST_2c.3}
H(k)=\beta+\alpha\cdot(-i\nabla_y+k)+V_\Gamma(y)\,.
\end{equation}
$H(k)$ acts on $L^2(M,\mathbb{C}^4)$ with periodic boundary conditions
(\ref{PST_2a.2}). The free Dirac operator has a spectral gap of size 2, in our
units, between the electron and positron subspace. If we assume $\|V_\Gamma\|<1$,
then this gap persists and the eigenvalues can be labelled as $E_0(k)\leq
E_1(k)\leq\ldots$ in the electron subspace and as $E_{-1}(k)\geq
E_{-2}(k)\geq\ldots$ in the positron subspace. One has $E_{-1}(k)<E_0(k)$ for all
$k\in M^\ast$. (In fact the labelling can be achieved without a restriction on
$\|V_\Gamma\|$, see \cite{13}).

For $V_\Gamma=0$, the eigenvalue $E(k)$ is two-fold degenerate
reflecting the spin $\frac{1}{2}$ of the electron, resp.~positron.
This degeneracy persists if the periodic potential is inversion
symmetric, see \cite{13} for details.
\begin{proposition}\label{PST_2c.prop1} Let $H$ be given by
(\ref{PST_2c.1}) with $\|V_\Gamma\|<1$. Let there exist $a\in\mathbb{R}^3$ such that
$V_\Gamma(x+a)=V_\Gamma(-x+a)$. Then each $E_n(k)$ is at least two-fold degenerate.
\end{proposition}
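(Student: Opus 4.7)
The plan is to exhibit an antiunitary symmetry that commutes with each fiber $H(k)$ and squares to $-1$, so that Kramers' lemma forces every eigenvalue to have even multiplicity.

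First I would introduce the parity operator centered at $a$, namely
\[
(\mathcal{P}\psi)(y) = \beta\,\psi(2a-y),
\]
acting on $L^2(M,\mathbb{C}^4)$. Using $\beta\alpha_j\beta = -\alpha_j$, the relation $(-i\nabla_y)\mapsto -(-i\nabla_y)$ under $y\mapsto 2a-y$, and the assumed inversion symmetry $V_\Gamma(2a-y)=V_\Gamma(y)$, a direct substitution gives
\[
\mathcal{P}\,H(k)\,\mathcal{P}^{-1} \;=\; \beta + \alpha\cdot(-i\nabla_y - k) + V_\Gamma(y) \;=\; H(-k).
\]
Also $\mathcal{P}^2 = I$ because $\beta^2=I$ and the position reflection is involutive.

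Next I would use the time-reversal operator for the Dirac equation, which in the standard representation can be written as $\Theta = U\,K$ with $K$ complex conjugation and $U$ a fixed unitary on $\mathbb{C}^4$ chosen so that $U\overline{\alpha_j}U^{-1} = -\alpha_j$ and $U\overline{\beta}U^{-1} = \beta$; it is well-known that such a $U$ exists, is antiunitary on spinors, and satisfies $\Theta^2 = -I$. Since $V_\Gamma$ is real and $K(-i\nabla)K = i\nabla$, one checks that
\[
\Theta\,H(k)\,\Theta^{-1} \;=\; H(-k).
\]

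Now define $C := \mathcal{P}\,\Theta$. It is antiunitary and satisfies
\[
C\,H(k)\,C^{-1} \;=\; \mathcal{P}\,H(-k)\,\mathcal{P}^{-1} \;=\; H(k)
\]
for every $k\in M^{*}$. The computation of $C^2$ reduces to checking that $\mathcal{P}$ and $\Theta$ commute on $L^2(M,\mathbb{C}^4)$: the spatial reflection $y\mapsto 2a-y$ commutes with $K$, and $\beta$ commutes with $U$ by the choice of representation, so $\mathcal{P}\Theta = \Theta\mathcal{P}$; combined with $\mathcal{P}^2=I$ and $\Theta^2=-I$ this yields $C^2=-I$.

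The conclusion is then the standard Kramers argument: if $\varphi\in H^2(M,\mathbb{C}^4)$ satisfies $H(k)\varphi = E_n(k)\varphi$, then $C\varphi$ is another eigenvector with the same eigenvalue, and $C^2=-I$ together with antiunitarity of $C$ forces $\langle\varphi,C\varphi\rangle_{\mathcal{H}_\mathrm{f}} = \langle C^2\varphi,C\varphi\rangle = -\langle\varphi,C\varphi\rangle$, hence $\varphi\perp C\varphi$ and the eigenspace has dimension $\ge 2$. The main obstacle is bookkeeping in the sign conventions for the Dirac matrices, specifically pinning down the unitary $U$ in $\Theta$ so that the identities $U\overline{\alpha_j}U^{-1}=-\alpha_j$, $U\overline{\beta}U^{-1}=\beta$, $\Theta^2=-I$, and $[\mathcal{P},\Theta]=0$ hold simultaneously; once this is settled, the proof is forced.
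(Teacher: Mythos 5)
Your proof is correct and follows essentially the same route as the paper: the paper also composes space inversion $R\psi(y)=\beta\psi(-y)$ (after reducing to $a=0$) with the Dirac time-reversal operator $T\psi=-i\alpha_3\alpha_1\psi^\ast$, notes that $RT$ commutes with $H(k)$, and deduces the degeneracy from $\langle\psi,RT\psi\rangle=0$, which is exactly your Kramers-type argument with $C=\mathcal{P}\Theta$, $C^2=-I$ (the paper just verifies the orthogonality by the explicit $\sigma_2$ computation instead of invoking $C^2=-I$ abstractly).
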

\begin{proof} Without loss of generality we
may assume $a=0$. We use the standard basis for the
$\alpha$-matrices, see \cite{7}. In this basis time-reversal
symmetry is implemented by the anti-unitary operator
\begin{equation}\label{PST_2c.4}
T\psi(y)= -i\alpha_3\alpha_1\psi^\ast(y)\,,
\end{equation}
where the complex conjugation is understood component-wise. Using
that $\alpha_\ell\alpha_3\alpha_1=
-\alpha_3\alpha_1\overline{\alpha}_\ell$, $\ell=1,2,3$, where
$\overline{\alpha}_\ell$ refers to matrix element-wise complex
conjugation, one checks that
\begin{equation}\label{PST_2c.5}
-i\nabla_y\alpha_\ell T=-i T\nabla_y\alpha_\ell\,,\quad k\alpha_\ell
T=-Tk\alpha_\ell
\end{equation}
and therefore
\begin{equation}\label{PST_2c.6}
T^{-1}H(k)T=H(-k)\,.
\end{equation}

Secondly we use space inversion as
\begin{equation}\label{PST_2c.7}
R\psi(y)=\beta\psi(-y)\,.
\end{equation}
Then
\begin{equation}\label{PST_2c.8}
R^{-1}H(k)R=H(-k)\,.
\end{equation}
Combining both symmetries implies
\begin{equation}\label{PST_2c.9}
T^{-1}R^{-1}H(k)RT=H(k)\,.
\end{equation}

If $H(k)\psi=E\psi$, then also $RT\psi$ is an eigenfunction with the
same eigenvalue. Thus our claim follows from
$\langle\psi,RT\psi\rangle=0$. To verify this identity we note that
$-i\alpha_3\alpha_1=\textrm{diag }(\sigma_2,\sigma_2)$ and $\langle
\chi,R\sigma_2\chi^\ast\rangle=0$ for an arbitrary two-spinor
$\chi$.
\end{proof}
\begin{corollary}\label{PST_2c.2a} The eigenvalue $E_n(0)$ of
$H(0)$ is at least two-fold degenerate.
 \end{corollary}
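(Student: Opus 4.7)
The plan is to invoke Kramers' theorem at the time-reversal invariant momentum $k=0$, so the inversion-symmetry hypothesis of Proposition~\ref{PST_2c.prop1} is not needed and the anti-unitary $T$ from (\ref{PST_2c.4}) alone suffices.

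First, I would evaluate the identity (\ref{PST_2c.6}) at $k=0$. Since $-k = k = 0$, this gives $T^{-1}H(0)T = H(0)$, so $T$ commutes with $H(0)$. Consequently, if $H(0)\psi = E_n(0)\psi$, then using that $E_n(0)$ is real and $T$ is anti-linear, $H(0)(T\psi) = T H(0)\psi = E_n(0)(T\psi)$, i.e.~$T\psi$ lies in the same eigenspace.

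Secondly, I would verify that $T^2 = -\1$ on $L^2(M,\C^4)$. In the standard basis, $\alpha_1$ and $\alpha_3$ are real while $\alpha_2$ is purely imaginary, so $\overline{\alpha_3\alpha_1} = \alpha_3\alpha_1$. A direct computation gives
\begin{equation}
T^2\psi = (-\I\alpha_3\alpha_1)\,\overline{(-\I\alpha_3\alpha_1)\psi^\ast} = -(\alpha_3\alpha_1)^2\psi = -\psi,
\end{equation}
using the anticommutation $\alpha_3\alpha_1 = -\alpha_1\alpha_3$ and $\alpha_\ell^2 = \1$.

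Finally, from $T^2 = -\1$ and the anti-unitarity of $T$ I would deduce that $\psi$ and $T\psi$ are orthogonal, hence linearly independent whenever $\psi \neq 0$. Explicitly,
\begin{equation}
\overline{\langle \psi,T\psi\rangle} = \langle T\psi, T^2\psi\rangle = -\langle T\psi,\psi\rangle = -\overline{\langle \psi,T\psi\rangle},
\end{equation}
which forces $\langle\psi,T\psi\rangle = 0$. Therefore the eigenspace of $E_n(0)$ has dimension at least two. The only step requiring any care is the computation of $T^2$ in the chosen basis; beyond that, the argument is the classical Kramers degeneracy applied at a time-reversal fixed point in the Brillouin zone.
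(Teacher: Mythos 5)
Your proof is correct and takes essentially the same route as the paper: both rest on the time-reversal symmetry $T^{-1}H(0)T=H(0)$ at $k=0$ together with $\langle\psi,T\psi\rangle=0$, the paper verifying the orthogonality directly from $-\I\alpha_3\alpha_1=\mathrm{diag}(\sigma_2,\sigma_2)$ being antisymmetric while you derive it from $T^2=-\1$, which is equivalent. One bookkeeping remark: since $(-\I)\overline{(-\I)}=+1$, the intermediate expression should be $+(\alpha_3\alpha_1)^2\psi$, which equals $-\psi$ because $(\alpha_3\alpha_1)^2=-\alpha_3^2\alpha_1^2=-\1$, so your conclusion $T^2=-\1$ stands.
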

\begin{proof}Since $T^\ast H(0)T=H(0)$ by (\ref{PST_2c.6}) and
$\langle\psi,(-i\alpha_3\alpha_1)\psi^\ast\rangle=0$, the claim
follows.
\end{proof}
 If $V_\Gamma$ is not inversion symmetric, generically an
energy band is two-fold degenerate at $k=0$ and then splits into two
non-degenerate bands. Note that a non-degenerate eigenvalue $E_n(k)$
has an associated eigenvector with a definite spin orientation.

The Pauli equation with spin-orbit coupling has the hamiltonian
\begin{equation}\label{PST_2c.12}
H=-\frac{1}{2}\Delta_x+V_\Gamma(x)+\frac{1}{4}\sigma\cdot\big(\nabla
V_\Gamma(x)\wedge(-i\nabla_x)\big)\,.
\end{equation}
After Zak transform the corresponding fiber hamiltonian becomes
\begin{equation}\label{PST_2c.13}
H(k)=\frac{1}{2}(-i\nabla_y+k)^2+V_\Gamma(y)+
\frac{1}{4}\sigma\cdot\big(\nabla
V_\Gamma(y)\wedge(-i\nabla_y+k)\big)
\end{equation}
with periodic boundary conditions. $H$ of (\ref{PST_2c.12}) is bounded from below.
But otherwise the band structure is similar to the periodic Dirac operator.
Proposition \ref{PST_2c.prop1} and Corollary \ref{PST_2c.2} hold as stated. In the
proof one only has to use the appropriate time-reversal operator, which is
$T\psi=\sigma_2\psi^\ast$ in the $\sigma_3$-eigenbasis.

\subsection{Gap condition and smoothness}\label{PST_sec.2d}

Let us consider one of the periodic hamiltonians, $H_\mathrm{per}$,
with fiber decomposition $H(k)$. $H_\mathrm{per}$ is adiabatically
perturbed to $H^\varepsilon$. Very crudely the corresponding unitary
groups should be close. To make such a notion quantitative a gap
condition must be imposed. We denote by $\sigma(H)$ the spectrum of
the self-adjoint operator $H$.\medskip\\
\textbf{Gap condition:} \textit{We distinguish a family of $m$
physically relevant energy bands $\{E_j(k)\,,\;n\leq j\leq
n+m-1\}=\sigma_0(k)$. This family satisfies the gap condition if}
\begin{equation}\label{PST_2d.1}
\textrm{dist}\{\sigma_0(k)\,,\;\sigma(H(k))\setminus\sigma_0(k)\}\geq
g>0\quad \textrm{for all }k\in M^\ast\,.
\end{equation}
We repeat that the gap condition is not a spectral condition for
$H_\mathrm{per}$. Let us set $P^0=\bigoplus^{n+m-1}_{j=n} P_j$.

Under the gap condition the projector $P(k)$ depends smoothly, in
many cases even (real) analytically, on $k$. $\mathrm{Ran}\,P^0(k)$ is spanned
by the basis $\{\varphi_j(k)\}_{j=n,\ldots,n+m-1}$. If the $m$
relevant energy bands have no crossings amongst each other, then
$\varphi_j$ is necessarily an eigenvector of $H(k)$ satisfying
$H(k)\varphi_j(k)$$=E_j(k)\varphi_j(k)$. But if there are band
crossings, it can be convenient not to insist on $\varphi_j(k)$
being an eigenvector of $H(k)$. Thus, while $P^0(k)$ is unique, the
spanning basis is not. In applications it is of importance to know
whether there is at least some choice of $\varphi_j(k)$,
$j=n,\ldots,n+m-1$, such that they have a smooth $k$-dependence.
Locally, this can be achieved. However, since $M^\ast$ has the
topology of a torus, a global extension might be impossible. In fact
this will generically happen for the magnetic Bloch hamiltonian, see
\cite{14,Novikov,16} for examples. Somewhat surprisingly, a
reasonably general answer has been provided only recently \cite{17}.
For the case of the Bloch hamiltonian, analyticity has been
proved before in cases $d=1$, $m$ arbitrary, and $d$ arbitrary, $m=1$, see
Nenciu \cite{18,19} and Helffer, Sj\"{o}strand \cite{20}. They rely
on analytical techniques. In \cite{17} topological methods are
developed.
\begin{proposition}\label{PST_2d.prop3} In case of the non-magnetic Bloch
hamiltonian let either $d\leq 3$, $m\in\mathbb{N}$ or $d\geq 4$,
$m=1$. Then there exists a collection of smooth maps
$\mathbb{R}^d\ni k\mapsto\varphi_j(k)\in L^2(M)$,
$j=n,\ldots,n+m-1$, with the following properties\smallskip\\
(i) the family $\{\varphi_j(k)\}_{j=n,\ldots,n+m-1}$ is orthonormal
and spans the range of $P^0(k)$.\smallskip\\
(ii) each map is equivariant in the sense that
\begin{equation}\label{PST_2d.2}
\varphi_j(k)=\tau(\lambda)\varphi(k+\lambda)\quad \textrm{for all
}k\in\mathbb{R}^d\,,\;\lambda\in \Lambda\,,
\end{equation}
where $\tau(\lambda)$ is multiplication by $e^{i\lambda\cdot y}$.
The same property holds for the non-magnetic periodic Dirac operator
and Pauli operator with spin-orbit coupling.
\end{proposition}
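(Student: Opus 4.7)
The plan is to view $P^0(k)$ as specifying a rank-$m$ Hermitian vector bundle over the Brillouin torus $\T^d_\ast := \R^d/\Lambda$, and to reduce the existence of a smooth equivariant frame to the geometric triviality of that bundle, which I would settle by combining time-reversal symmetry with the classification of complex vector bundles over low-dimensional tori.

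First, I would use the gap condition and the Riesz formula
\[
P^0(k) \;=\; \frac{1}{2\pi\I}\oint_{\mathcal{C}(k)}(z-H(k))^{-1}\,\D z\,,
\]
with $\mathcal{C}(k)$ a positively oriented contour enclosing $\sigma_0(k)$ and separating it from the remainder of $\sigma(H(k))$; uniform control of such a contour near each $k_0$, which is possible by (\ref{PST_2d.1}), combined with the analytic dependence of $H(k)$ on $k$, yields real-analytic $k$-dependence of $P^0(k)$ on all of $\R^d$. The relation $H(k+\lambda) = \tau(\lambda)^{-1} H(k) \tau(\lambda)$ passes through the integral to give $P^0(k+\lambda) = \tau(\lambda)^{-1} P^0(k) \tau(\lambda)$. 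Local smooth orthonormal frames are then standard: pick any orthonormal basis $\{e_j\}$ of $\mathrm{Ran}\, P^0(k_0)$, define $\tilde\varphi_j(k) = P^0(k)e_j$, and apply Gram--Schmidt on a small enough neighborhood of $k_0$. By the equivariance above, the family $\{\mathrm{Ran}\, P^0(k)\}$ descends to a smooth rank-$m$ Hermitian vector bundle $\mathcal{E} \to \T^d_\ast$, and a $\tau$-equivariant global frame on $\R^d$ is precisely a global smooth orthonormal frame of $\mathcal{E}$.

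The decisive input is time-reversal. For the non-magnetic Bloch operator, pointwise complex conjugation $C\psi(y) = \overline{\psi(y)}$ is an antiunitary symmetry with $CH(k)C^{-1} = H(-k)$, hence $CP^0(k)C^{-1} = P^0(-k)$. This equips $\mathcal{E}$ with an antilinear bundle involution covering $\sigma: k \mapsto -k$, making $\mathcal{E}$ a ``Real'' bundle in Atiyah's sense, so that $\sigma^\ast \mathcal{E} \cong \overline{\mathcal{E}}$ and consequently $\sigma^\ast c_1(\mathcal{E}) = -c_1(\mathcal{E})$ in $H^2(\T^d_\ast,\mathbb{Z})$. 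The induced action of $\sigma$ on $H^2(\T^d_\ast,\mathbb{Z})$ is trivial (each generator is a wedge of two one-forms, picking up two signs), so $2c_1(\mathcal{E}) = 0$; torsion-freeness of $H^2(\T^d_\ast,\mathbb{Z})$ then gives $c_1(\mathcal{E}) = 0$. The same scheme applies to $H_\mathrm{LS}$ and $H_\mathrm{DB}$ with $C$ replaced by the appropriate spinorial time-reversal operator.

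Finally I would invoke the classification of complex vector bundles over the $d$-torus: for $d \leq 3$ the primary obstructions to triviality of a rank-$m$ bundle lie in $H^i(\T^d_\ast, \pi_{i-1}(U(m)))$ for $2 \leq i \leq d$, among which only $i=2$ contributes, with $\pi_1(U(m)) = \mathbb{Z}$ and the obstruction equal to $c_1$; for $m=1$ and arbitrary $d$, line bundles are classified by $c_1$ alone. In both regimes the vanishing of $c_1(\mathcal{E})$ forces $\mathcal{E}$ to be trivial, producing a global smooth orthonormal frame that lifts uniquely via (ii) to the desired family on $\R^d$. The main obstacle is the topological step: showing that $c_1$ really is the sole obstruction in the stated $(d,m)$-range, which explains the restriction and why the claim would fail, in general, once $d \geq 4$ and $m \geq 2$ where a non-trivial $c_2 \in H^4(\T^d_\ast,\mathbb{Z})$ may appear.
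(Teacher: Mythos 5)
Your proposal is correct and follows essentially the same route the paper indicates (and attributes to reference [17]): realize $\mathrm{Ran}\,P^0(k)$ as a smooth rank-$m$ Bloch bundle over the torus via the Riesz projection, use the antiunitary time-reversal intertwining $P^0(-k)$ with $P^0(k)$ to force $c_1=0$, and conclude triviality from the classification of vector bundles over $\mathbb{T}^d$ for $d\leq 3$ (or rank one in any $d$), with the equivariant frame obtained by lifting the global frame. The only cosmetic caveats are that for the Dirac and spin-orbit cases the antiunitary symmetry squares to $-1$ (a quaternionic rather than ``Real'' structure, which does not affect the $c_1$ argument) and that the continuous trivialization should be smoothed by the standard approximation argument, as in the cited work.
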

\noindent\textbf{Remark}. The proof uses the first Chern
class of the vector bundle whose fiber at $k$ is the span of the
family $\{\varphi_j(k)\}_{j=n,\ldots,n+m-1}$ i.e. $\mathrm{Ran}\,P^0(k)$. To
establish continuity, and thus smoothness, this first Chern class
has to vanish, a property, which does not hold for a magnetic Bloch
hamiltonian except for some particular energy bands.\medskip

If $\varepsilon$ is small, excitations across the energy gap are
difficult to achieve. More precisely to $P^0$ one can associate a
projection operator $\Pi^\varepsilon$ such that for arbitrary
$\ell,\ell'\in \mathbb{N}$, $\tau\in \mathbb{R}_+$, it holds
\begin{equation}\label{PST_2d.1a}
\|(1-\Pi^\varepsilon)e^{-iH^\varepsilon t}\Pi^\varepsilon\psi\|\leq
c_{\ell,\ell'}(\tau) \varepsilon^\ell\|\psi\|
\end{equation}
for $0\leq t\leq\varepsilon^{-\ell'}\tau$ with suitable constants
$c_{\ell,\ell'}(\tau)$ independent of $\varepsilon$. In other words that the
subspaces $\Pi^\varepsilon\mathcal{H}$ and $(1-\Pi^\varepsilon)\mathcal{H}$ almost
decouple, i.e.~decouple at any prescribed level of precision and over any polynomial
length of the time span under consideration. For the specific case of the Bloch
hamiltonian more quantitative details on the decoupling are provided in Section
\ref{PST_sec.3}.

If the gap condition is not satisfied, the dynamical properties are much more model
dependent. Firstly the gap condition can be violated in various ways. In our
context, since $H(k)$ has discrete spectrum, the violation comes through band
crossings. The behavior close to a band crossing has to be studied separately
\cite{23,24}. In other models the energy band sits at the bottom of the continuous
spectrum of $H(k)$ without gap \cite{25}. Then an assertion like Equation
(\ref{PST_2d.1a}) holds only under a suitable restriction to small $\ell,\ell'$,
usually $\ell,\ell'=1$ or perhaps $\ell=2$, $\ell'=1$.

The inequality (\ref{PST_2d.1a}) makes no assertion about the dynamics inside the
almost invariant subspace $\Pi^\varepsilon\mathcal{H}$. While there is a general
theory available \cite{1}, we prefer to discuss the examples separately in the
subsequent sections.

\section{Nonmagnetic Bloch hamiltonians: Peierls substitution and
geometric phase corrections}\label{PST_sec.3} \setcounter{equation}{0}

We discuss in more detail the effective dynamics for the
Schr\"{o}dinger equation with a periodic potential. For concreteness
we fix the spatial dimension to be 3. Under Zak transform the
nonmagnetic Bloch hamiltonian becomes
\begin{equation}\label{PST_3.1}
\mathcal{U}_Z\Big(\frac{1}{2}\big(-i\nabla_x -A(\varepsilon
x)\big)^2 +V_\Gamma(x)+\phi(\varepsilon
x)\Big)\mathcal{U}^{-1}_Z=H^\varepsilon_Z
\end{equation}
with
\begin{equation}\label{PST_3.2}
H^\varepsilon_Z=\frac{1}{2}\big(-i\nabla_y +k
-A(i\varepsilon\nabla^\tau_k)\big)^2
+V_\Gamma(y)+\phi(i\varepsilon\nabla^\tau_k)\,.
\end{equation}
Here $\nabla^\tau_k$ is differentation with respect to $k$ and satisfying the
$y$-dependent boundary conditions (\ref{PST_2a.3}). $H^\varepsilon_Z$ is a
self-adjoint operator on $L^2_\tau(\mathbb{R}^3,H^2(M))$, compare with
(\ref{PST_2a.5}).

In (\ref{PST_3.2}) we observe that the external potentials couple the fibers. To
emphasize this feature we think of (\ref{PST_3.2}) as being obtained through Weyl
quantization from the operator valued function
\begin{equation}\label{PST_3.3}
H_0(k,r)=\frac{1}{2}\big(-i\nabla_y +k -A(r)\big)^2
+V_\Gamma(y)+\phi(r)
\end{equation}
as defined on $(r,k)\in\mathbb{R}^6$ and acting
on $\mathcal{H}_\mathrm{f}$ with fixed domain $H^2(M)$, see \cite{21} for
details. In this form one is reminded of the Weyl quantization of
the classical hamiltonian function
$h_{\mathrm{cl}}(q,p)=\frac{1}{2}p^2+V(q)$ which yields the
semiclassical hamiltonian
\begin{equation}\label{PST_3.4}
H_\mathrm{sc}=\frac{1}{2}(-i\varepsilon\nabla_x)^2+V(x)
\end{equation}
acting in $L^2(\mathbb{R}^3)$. The analysis of (\ref{PST_3.4}) yields that on the
time-scale $\varepsilon^{-1}t$ the wave packet dynamics governed by $H_\mathrm{sc}$
well approximates the flow generated by $h_\mathrm{cl}$. In contrast, the adiabatic
analysis deals with operator valued symbols, as in (\ref{PST_3.3}), and has as a
goal to establish that the dynamics decouples into almost invariant subspaces and to
determine the approximate dynamics within each such subspace.

To be specific, let us then fix throughout one band index $n$ and let us assume that
the band energy $E_n$ is nondegenerate and satisfies the gap condition. Therefore we
know that $E_n:M^\ast\to\mathbb{R}$ is smooth and we can choose the family of Bloch
functions $\varphi_n(k)$, with $H(k)\varphi_n(k)=E_n(k)\varphi_n(k)$, such that
$\varphi_n$ depends smoothly on $k$. For each $\ell\in\mathbb{N}=\{0,1,\ldots\}$
there exists then an orthogonal projection $\Pi^\varepsilon_\ell$ on
$\mathcal{H}_\tau$ such that
\begin{equation}\label{PST_3.5}
\|[H^\varepsilon_Z\,,\,\Pi^\varepsilon_\ell]\|\leq c_\ell\varepsilon^{\ell+1}
\end{equation}
for some constants $c_\ell$. Integrating in time one concludes
that the subspaces
$\Pi^\varepsilon_\ell\mathcal{H}_\tau$ are almost invariant in the
sense that
\begin{equation}\label{PST_3.6}
\|(1-\Pi^\varepsilon_\ell)e^{-i\varepsilon^{-\ell'}tH^\varepsilon_Z}\Pi^\varepsilon_\ell\psi\|\leq
\|\psi\|(1+|t|)c_{\ell}\,\varepsilon^{\ell+1}\varepsilon^{-\ell'}
\end{equation}
for any $\ell, \ell'\in\mathbb{N}$. Note that the adiabatic time scale, order
$\varepsilon^{-\ell'}$, can have any power law increase, at the expense of choosing
the order of the projection $\Pi^\varepsilon_\ell$ sufficiently large. Only for
times of order $e^{1/\varepsilon}$ one observes transitions away from the almost
invariant subspace. The zeroth order projection is attached to the $n$-th band under
consideration, while the higher orders are successively smaller corrections to
$\Pi^\varepsilon_0$. To construct $\Pi^\varepsilon_0$ one considers the projection
onto the $n$-th band, $|\varphi_n(k)\rangle\langle \varphi_n(k)|$, as an operator
valued function with values in $B(\mathcal{H}_\mathrm{f})$. From it we obtain the
minimally substituted projection $|\varphi_n(k-A(r))\rangle\langle
\varphi_n(k-A(r))|$. Its Weyl quantization is $\varepsilon$-close to the orthogonal
projection $\Pi^\varepsilon_0$.

The second task is to determine the approximate time-evolution on
$\Pi^\varepsilon_\ell\mathcal{H}_\tau$. Since the subspace changes with
$\varepsilon$, it is more convenient to unitarily map
$\Pi^\varepsilon_\ell\mathcal{H}_\tau$ to an $\varepsilon$-independent
reference Hilbert space, which in our case is simply $L^2(M^\ast)$.
The dynamics on $L^2(M^\ast)$ is governed by an effective
hamiltonian. It is written down most easily in terms of a
hamiltonian function
$h^\varepsilon_\ell:M^\ast\times\mathbb{R}^3\to\mathbb{R}$.
$h^\varepsilon_\ell$ is a smooth function. We also may regard it as
defined on $\mathbb{R}^3\times\mathbb{R}^3$ and
$\Gamma^\ast$-periodic in the first argument. $h^\varepsilon_\ell$
admits the power series
\begin{equation}\label{PST_3.7}
h^\varepsilon_\ell=\sum^\ell_{j=0}\varepsilon^j h_j
\end{equation}
with $\varepsilon$-independent functions $h_j$. The effective
quantum hamiltonian is obtained from $h^\varepsilon_\ell$ through Weyl
quantization. The index $\ell$ regulates the time scale over which the
approximation is valid and the size of the allowed error.

In \cite{21} we provide an iterative algorithm to compute $h_j$. In
practice only $h_0$ and $h_1$ can be obtained, at best $h_2$ under
simplifying assumptions. While this may look very restrictive, it
turns out that already $h_1$ yields novel physical effects as compared to
$h_0$. Even higher order corrections seem to be less significant.

To lowest order one obtains
\begin{equation}\label{PST_3.8}
h_0(k,r)=E_n(k-A(r))+\phi(r)\,,
\end{equation}
which Weyl-quantizes to
\begin{equation}\label{PST_3.9}
    \mathcal{W}^\varepsilon
    [h_0]=E_n(k-A(i\varepsilon\nabla_k))+\phi(i\varepsilon\nabla_k)
\end{equation}
acting on $L^2(M^\ast)$, where $i\nabla_k$ is the operator of differentiation with
periodic boundary conditions. (The twisted boundary conditions appearing in
(\ref{PST_3.2}) are absorbed into the unitary map of $\Pi^\varepsilon_0\mathcal{H}$
to $L^2(M^\ast)$.) In solid state physics the Weyl quantization (\ref{PST_3.9}) is
referred to as {\it Peierls substitution}. (\ref{PST_3.8}), (\ref{PST_3.9}) have a
familiar form. The periodic potential merely changes the kinetic energy
$\frac{1}{2}k^2$ of a free particle to $E_n(k)$. The main distinctive feature is the
periodicity of the kinetic energy. For example, in presence of a linear potential
$\phi$, $\phi(x)=-E\cdot x$, an electron, initially at rest, will start to
accelerate along $E$ but then turns back because of periodicity in $k$.

To first order the effective hamiltonian reads
\begin{equation}\label{PST_3.10}
h_1(k,r)=\big(\nabla \phi(r)-\nabla E_n(\widetilde{k})\wedge
B(r)\big)\cdot\mathcal{A}_n(\widetilde{k})-B(r)\cdot
\mathcal{M}_n(\widetilde{k})\,,
\end{equation}
with the kinetic wave number $\widetilde{k}=k-A(r)$. The coefficients
$\mathcal{A}_n$ and $\mathcal{M}_n$ are the geometric
phases. They carry information on the Bloch functions $\varphi_n(k)$.
$\mathcal{A}_n$ is the Berry connection given through
\begin{equation}\label{PST_3.11}
\mathcal{A}_n(k)=i\langle \varphi_n(k)\,,\; \nabla_k
\varphi_n(k)\rangle_{\mathcal{H}_f}
\end{equation}
and $\mathcal{M}_n $ is the Rammal-Wilkinson phase given trough
\begin{equation}\label{PST_3.12}
\mathcal{M}_n(k)=\frac{1}{2}i\langle\nabla_k
\varphi_n(k),\wedge(H(k)-E_n(k))\nabla_k \varphi_n(k)\rangle_{\mathcal{H}_f}\,.
\end{equation}

The Bloch functions $\varphi_n$ are only determined up to a smooth phase
$\alpha(k)$, i.e.~instead of $\varphi_n(k)$ one might as well use
$e^{-i\alpha(k)}\varphi_n(k)$ with smooth $\alpha:M^\ast\to \mathbb{R}$.
Clearly $\mathcal{M}_n$ is independent of the gauge field $\alpha$.
On the other hand, $\mathcal{A}$ is gauge-dependent while its curl
\begin{equation}\label{PST_3.13}
\Omega_n=\nabla\wedge\mathcal{A}_n
\end{equation}
is gauge independent. From time-reversal one concludes that
\begin{equation}\label{PST_3.14}
\Omega_n(-k)=-\Omega_n(k)\,.
\end{equation}
In particular, in dimension $d=2$ for the first Chern number of the
Bloch vector bundle one obtains
\begin{equation}\label{PST_3.15}
\int_{M^\ast}dk\Omega_n(k)=0\,.
\end{equation}

For the magnetic Bloch hamiltonian, (\ref{PST_3.14}) is violated in general, see
Section \ref{PST_sec.4}. The integral in (\ref{PST_3.15}) can take only integer
values (in the appropriate units) and the first Chern number may be different from
zero. Physically this leads to the quantization of the Hall current \cite{21,22}.

We still owe the reader precise a statement on the error in the approximation. At
the moment we work in the representation space at precision level $\ell=1$. Let
$H_\mathrm{eff}$ be the Weyl quantization of $h_0+\varepsilon h_1$, see
(\ref{PST_3.8}) and (\ref{PST_3.10}). There is then a unitary
$U^\varepsilon:\Pi^\varepsilon_1\mathcal{H}_\tau\to L^2(M^\ast)$ such that for all
$\psi\in \mathcal{H_\tau}$
\begin{equation}\label{PST_3.16}
\|\big(e^{-iH^\varepsilon_Z t}-U^{\varepsilon\ast}
e^{-iH_{\mathrm{eff}}t}U^\varepsilon\big)
\Pi^\varepsilon_1\psi\|\leq c\|\psi\|(1+|\tau|)\varepsilon^2
\end{equation}
with $|t|\leq\varepsilon^{-1}\tau$ and some constant $c$ independent
of $\|\psi\|$, $\tau$, and $\varepsilon$.


\goodbreak

\section{Magnetic Bloch hamiltonians: the Hofstadter butterfly}\label{PST_sec.4}
\setcounter{equation}{0}

We turn to a magnetic Bloch hamiltonian in the form (\ref{PST_1.4}), in dimension
$d=2$ and with a transverse constant magnetic field $B_0$. We want to explain how
the limits $B_0 \to \infty$ and $B_0 \to 0$ can be understood with adiabatic
methods.  As a remark, it is worthwhile to recall that, when the physical constants
are restored, the dimensionless parameter $B_0$ is given by
\begin{equation}\label{PST_B magnitude}
B_0 = \frac{\mathcal{B}_0 S}{2 \pi \hbar c / e}\,,
\end{equation}
where $S$ is the area of the fundamental cell of $\Gamma$ and
$\mathcal{B}_0$ the strength of the magnetic field, both expressed
in their dimensional units. Thus $B_0$ corresponds physically to the
magnetic flux per unit cell divided by  $hc/e$, as the fundamental quantum of
magnetic flux. This section is based essentially on
\cite{FaurePanati}, which elaborates on previous related results
\cite{Bel1986, 20}.

Adiabatic limits are always related to separation of time-scales. In the present
case, one indeed expects that as $B_0 \to \infty$ the cyclotron motion induced by
$B_0$ is faster than the motion induced by $V_{\Gamma}$, while in the limit $B_0 \to
0$ the microscopic variations of the wave function induced by $V_{\Gamma}$ are
expected to be faster than the cyclotron motion.

Let us focus first on the Landau regime $B_0 \to \infty$. In order to make quantitative
the previous claim, one introduces the operators
\begin{equation}\label{PST_L operators}
\left\{%
\begin{array}{ll}
    L_1 = \frac{1}{\sqrt{B_0}}\(p_1 + \frac{1}{2}B_0 \, x_2  \)\,,\\
     & \qquad \qquad [L_1,L_2]=i \1, \\
    L_2 = \frac{1}{\sqrt{B_0}}\(p_2 - \frac{1}{2}B_0 \, x_1  \)\,, \\
\end{array}%
\right.
\end{equation}
and the complementary pair of operators
\begin{equation}\label{PST_G operators}
\left\{%
\begin{array}{ll}
    G_1 = \frac{1}{B_0}\(p_1 - \frac{1}{2}B_0 \, x_2  \)\,,\\
     & \qquad \qquad [G_1,G_2]= \frac{i}{B_0} \1, \\
    G_2 = \frac{1}{B_0}\(p_2 + \frac{1}{2}B_0 \, x_1  \)\,, \\
\end{array}%
\right.
\end{equation}
where the relative sign is chosen such that $[L_i, G_j]=0$, for $i,j=1,2$.

If $V_{\Gamma} =0$, then $H_{\rm MB}$ describes a harmonic
oscillator, with eigenfunctions localized on a scale $|B_0|^{-1/2}$;
this corresponds to the cyclotron motion in classical mechanics.
Since $[G_i, H_{\rm MB}] =0$, the operators $G_1$ and $G_2$ describe
conserved quantities, which semiclassically correspond  to the
coordinates of the center of the cyclotron motion.

If $V_{\Gamma} \neq 0$, but the energy scale $ \| V_{\Gamma}\|$ is
smaller than the cyclotron energy $\approx B_0$, then the operators
$G_i$ have a non-trivial but slow dynamics. By introducing the
adiabatic parameter $\s = 1/B_0$ the hamiltonian reads
\begin{equation}
H_{\rm MB} = \frac{1}{2 \s} \( L_1^2 + L_2^2 \) + V_{\Gamma}\( G_2 - \sqrt{\s}L_2,
\, - G_1 + \sqrt{\s} L_1\).
\end{equation}

In view of the commutator $[G_1,G_2] = i \s \1$, one can regard $\s
\, H_{\rm MB}$ as the $\s$-Weyl quantization (in the sense of the
mapping $(q,p)\mapsto(G_1,G_2)$) of the operator-valued symbol
\begin{equation}
    h_{\rm MB}(q,p) = \frac{1}{2} \( L_1^2 + L_2^2 \) + \s \, V_{\Gamma}\( p -
\sqrt{\s}L_2, - q + \sqrt{\s}L_1\).
\end{equation}
For each fixed $(q,p) \in \R^2$, $h_{\rm MB}(q,p)$ is an operator acting in the
Hilbert space $\Hf \cong L^2(\R)$ corresponding to the fast degrees of freedom. If
$\| V_{\Gamma}\|_{\B(\Hi)} <  \infty$, then $h_{\rm MB}(q,p)$ has purely discrete
spectrum, with eigenvalues
$$
\lambda_{n,\, \s}(q,p) = (n + \frac{1}{2}) + \s V_{\Gamma}(p,-q) +
\Or(\s^{3/2}), \qquad n \in \N,
$$
as $\s \downarrow 0$. The index $n \in \N$ labels the
\emph{Landau levels}. For $\s$ small enough, each eigenvalue band is
separated from the rest of the spectrum by an uniform gap. Thus we
can apply space-adiabatic perturbation theory to show that the band
corresponds to an almost-invariant subspace $\Pi_{n, \s}
L^2(\R^2)$. Let us focus on a specific $n \in \N$. One can prove
that the dynamics inside $\mathrm{Ran}\, \Pi_{n, \s} L^2(\R^2)$ is
described by an effective hamiltonian, which at the first order of
approximation in $\s$ reads
\begin{equation}\label{PST_Landau effective}
    h^{\s}_1 = (n + \frac{1}{2}) + \s V_{\Gamma}(G_1,-G_2).
\end{equation}

The first term in (\ref{PST_Landau effective}) is a multiple of the identity, and as
such does not contribute to the dynamics as far as the expectation values of
observables are concerned. Leading-order dynamics is thus described by the second
term, which does not depend on the Landau level $n \in \N$. Since $V_{\Gamma}$ is a
biperiodic function and $(G_1, G_2)$ a canonical pair, the second term is a
Harper-like operator.  The spectrum of such operators exhibit a complex fractal
behavior (\emph{Hofstadter butterfly}) sensitively depending on the diophantine
properties of $\alpha = \frac{B_0}{2 \pi}$ (notice that $V_{\Gamma}(G_1, G_2)$
depends on $\alpha$ through the commutator $[G_1,G_2] = i B_0^{-1} \1$). The Cantor
structure of the spectrum was  proven first in \cite{BeS} for the case
$V_{\Gamma}(x_1,x_2) = \lambda \cos x_1 + \cos x_2$ (Harper model), for a dense set
of the parameter values. Later Helffer and Sj\"{o}strand accomplished a detailed
semiclassical analysis of the Harper operator \cite{HS_Harper}. As a final step the
Cantor spectrum has been proven by Puig ($\lambda \neq 0$, $\alpha$ Diophantine)
\cite{Puig}, and by Avila and Jitomirskaya \cite{AvilaJitomirskaya} for all the
conjectured values of the parameters: $\lambda \neq 0$, $\alpha$ irrational (the
\emph{Ten Martini conjecture}, as baptized by B. Simon).

Secondly we turn to the opposite limit $B_0 \to 0$, where the slow part of the
dynamics is still described by the magnetic momentum operators $\widetilde{L}_j =
\sqrt{B_0} L_j$ $(j=1,2)$, with commutator of order  $\Or(B_0)$. However the
decomposition given by (\ref{PST_L operators}) and (\ref{PST_G operators}) is no
longer convenient.

Since $A_0$ is a linear function, $A_0(\epsi x) = \frac{1}{2} \epsi B_0 \wedge x $,
the slow variation limit $\epsi \to 0$ agrees with the weak field limit $B_0 \to 0$.
We then pose $\epsi=B_0$ and we regard $H_{\rm MB}$ in (\ref{PST_1.4}) as an
adiabatic perturbation of the periodic hamiltonian (\ref{PST_2a.1a}). Thus we are
reduced to the situation described in Section 3: to each isolated Bloch band of the
unperturbed hamiltonian there corresponds a subspace $\Pi_{n, \epsi}L^2(\R^2)$ which
is approximately invariant under the dynamics as $\epsi \downarrow 0$. The dynamics
inside this subspace is described by Peierls substitution (\ref{PST_3.9}), which now
reads
\begin{equation}\label{PST_Peierls magnetic}
    \mathcal{W}^{\epsi}[h_0] = E_n(k - \frac{1}{2} e_3 \wedge (i \epsi \nabla_k)),
\end{equation}
as an operator acting in $L^2(\T^2, dk)$. Here
$B_0=(0,0,\epsi)$ and $e_i$ is the unit vector in the
$i$-th direction.

Formula (\ref{PST_Peierls magnetic}) shows that the leading order effective
hamiltonian depends only on the operators $(K_1,K_2)=K$,
$$
K = k - \frac{1}{2} e_3 \wedge (i \epsi \nabla_k),
$$
which roughly speaking are the Fourier transform of the pair $(\widetilde{L}_1,
\widetilde{L}_2)$, and not on the complementary pair of operators. The same property
holds true for the effective hamiltonian $h^{\epsi}_\ell$, at any order of
approximation $\ell \in \N$, see \cite{FaurePanati}, with important consequences on
the splitting of magnetic subbands at small but finite $B_0$.

An operator in the form (\ref{PST_Peierls magnetic}), shortly written
$E_n(K_1,K_2)$, is \emph{isospectral} to an Harper-like operator, namely
$E_n(G_1,G_2)$ acting in $L^2(\R)$. Indeed the first numerical evidence of the
butterfly-like Cantor structure of the spectrum of Harper-like operators appeared
when Hofstadter investigated the spectrum of $\cos K_1 + \cos K_2$ as a function of
$\epsi$ \cite{Hofstadter1976}.  On the other side, an operator of the form
$E_n(K_1,K_2)$ is not \emph{unitarily equivalent} to the Harper operator
$E_n(G_1,G_2)$. The important geometric and physical consequences of this fact are
developed in \cite{FaurePanati}.

Having explained the two extreme cases, $B_0 \to 0$ and $B_0 \to \infty$, the reader
may wonder about the intermediate values of the magnetic field, $B_0 \approx 1$.  As
explained already in Section \ref{PST_sec.2c} it is convenient to introduce the
magnetic translations
$$
\mathcal{T}_{\alpha}= e^{-\frac{i}{2}\varphi(\alpha)}\, \exp(i B_0 \, \alpha \cdot
G), \qquad \alpha \in \Gamma_0,
$$
see (\ref{PST_Magnetic translations}) and (\ref{PST_2b.3bis}). If $B_0$ satisfy
Assumption~2, then $\{ \mathcal{T}_{\alpha} \}$ is a commutative group, thus leading
to the magnetic Zak transform (\ref{PST_2b.4}). $H_{\rm MB}$ is then a fibered
operator over the magnetic Bloch momentum $\kappa \in \T^2$. At each $\kappa$ the
spectrum of $H_{\rm MB}(\kappa)$ is pure point and the corresponding eigenvalues
$\mathcal{E}^{B_0}_{n}$ are the \emph{magnetic Bloch bands}.

In view of this structure, one might argue that the adiabatic perturbation of the
hamiltonian which includes, on top of the constant magnetic field $B_0$, a slowly
varying magnetic potential $A(\epsi x)$ as in (\ref{PST_1.5}) can be treated with
the methods of Section 3. There is however one crucial element missing. Indeed one
can still associate to each magnetic Bloch band $\mathcal{E}^{B_0}_{n}$, isolated
from the rest of the spectrum, an almost-invariant subspace ${\rm Ran}\,
\Pi^{B_0}_{n}$. On the other side the construction of the effective hamiltonian
relies on smoothness which may be impeded for topological reasons. Indeed the
analogue of Proposition \ref{PST_2d.prop3} is generically false for magnetic Bloch
hamiltonians, as well-understood  \cite{14,Novikov,16}. In geometric terminology
this fact is rephrased by saying that the magnetic Bloch bundle is generically
non-trivial (in technical sense).  This important fact has sometimes been
overlooked. For example, Assumption~B in \cite{DGR04} is equivalent to the
triviality of the magnetic Bloch bundle. Under this assumption
 the magnetic case is already covered by the results  in \cite{21}.
Thus the problem of adiabatic perturbation of a generic
magnetic Bloch hamiltonian appears to be an open, in our view challenging, problem
for the future.

\goodbreak


\section{Piezoelectricity}\label{PST_sec.5}
\setcounter{equation}{0}

In the year 1880 the brothers Jacques and Pierre Curie discovered
that some crystalline solids (like quartz, tourmaline, topaz,
\ldots) exhibit a macroscopic polarization if the sample is
strained.

It turns out that also this effect can be understood  in the
framework of adiabatically perturbed periodic hamiltonians, cf.\
\cite{27,28}.  The perturbation is now slowly in time,
\begin{equation}\label{PST_1.6'}
H_\mathrm{PE}(t)=-\frac{1}{2}\Delta_x+V_{\Gamma(\varepsilon
t)}(x,\varepsilon t)\,.
\end{equation}
If the potential $V_\Gamma(x,\epsi t)$ has no center of inversion,
i.e.  there is no point with respect to which the potential has space-reflection
symmetry, then  the slow variation of the periodic potential is expected to
generate a non-zero
current and can be shown to do so for particular examples \cite{avron1997}. By translation invariance this current if averaged over a unit cell is
everywhere the same and we denote the average current by $J^\epsi(t)$. For the
following discussion we assume that $V_\Gamma$ varies only for times in the finite
interval $[0,T]$.  Integrating the current per volume over the relevant time
interval yields the average polarization,
\[
\Delta {\bf P}^\epsi = \int_0^T\D t\, J^\epsi(t)\,.
\]
In this section we discuss results that relate the current $J^\epsi(t)$ directly to
the quantum mechanics of non-interacting particles governed by the hamiltonian
\eqref{PST_1.6'}, without the detour via the semiclassical model. For this we need
to solve the Schr\"odinger equation with initial state $\rho(0)=P (0)$ being the
spectral projection of $H_{\rm PE}(0)$ below the Fermi energy $E(0)$. Since the
piezo effect occurs only for insulators, we can assume that $E(0)$ lies in a gap of
the spectrum of $H_{\rm PE}(0)$ and, in order to simplify the discussion, we also
assume that this gap does not close in the course of time. Hence there is a
continuous function $E:[0,T]\to\R$ such that $E(t)$ lies in a gap of $H_{\rm PE}(t)$
for all $t$. The state at time $t$ is given by
\[
\rho^\epsi(t) = U^\epsi(t,0)\, P (0)\, U^\epsi(t,0)^*\,,
\]
where the unitary propagator $U^\epsi(t,0)$ is the solution of the
time-dependent Schr\"odinger equation
\begin{equation}
\I\epsi\frac{\D}{\D t}\,U^\epsi(t,0) = H_{\rm PE}(t) \,U^\epsi(t,0)
\qquad\mbox{with}\quad U^\epsi(0,0)= {\bf 1}\,.
\end{equation}
With the current operator  given by
\begin{equation}\label{PST_curr}
j^\epsi :=      \frac{ \I }{\epsi} \,  [ H(t), x] =
-\frac{\I}{\epsi}\nabla_x\,,
\end{equation}
and the trace per volume defined as
\begin{equation}        \label{PST_TraceperVolume}
\mathcal T (A) :=  \lim_{\Lambda_n\to\mathbb{R}^3}\frac{1}{|\Lambda_n|}
\re\, \Tr ({\bf 1}_{\Lambda_n} A)\,,
\end{equation}
 with ${\bf 1}_{\Lambda_n}$ being the characteristic
 function of a $3$-dimensional box $\Lambda_n$ with finite volume
$|\Lambda_n|$, the average current in the state $\rho^\epsi(t)$ is
\[
J^\epsi(t) = \mathcal{T}(\rho^\epsi(t)\,j^\epsi)\,.
\]
  Finally  the  average polarization is
\begin{equation}\label{PST_defp}
\Delta \p^\epsi =  \int_{0}^{T} \!\!\!\D t \,\,\mathcal T (\rho^\epsi(t)\,J^\epsi)\, ,
\end{equation}
which is the main quantity of physical interest. The given framework allows us to describe the
macroscopic polarization of a solid by a pure \emph{bulk property}, i.e.\ independently of the shape of the sample.

In the simplest but most important case (see Paragraph~(ii) in Section~1 for a
discussion of the model), the periodic potential $V_\Gamma(x,\epsi t)$   is periodic
with respect to a time-{\it independent} lattice $\Gamma$.  For this case King-Smith
and Vanderbilt  \cite{5} derived a formula for $\Delta\p$ based on linear response
theory, which turned out to make accurate predictions for the polarization of many
materials. Their by now widely applied  formula reads
\begin{equation}  \label{PST_KSV0}
\Delta \p  =  \frac{1}{(2\pi)^3} \sum_{n=0}^{N_{\rm c}} \int_{M^*} \D k \,\, \big(
\A_n(k,T) - \A_n(k,0) \big),
\end{equation}
where the sum runs over all the occupied Bloch bands and $\A_n(k,t)$ is the Berry
connection coefficient for the $n$-th Bloch band at time $t\in \R$,
\[
\A_n(k,t)= \I \langle \varphi_n(k,t) , \nabla_k \varphi_n  (k,t) \rangle_{L^2(M)}\,.
\]
 Although $\mathcal{A}_n$ depends on the choice of the Bloch function
$\varphi_n$, the average polarization  \eqref{PST_KSV0} defines a gauge invariant
quantity, i.e.\ it is independent of the choice of Bloch functions.

In \cite{27} we show that $\Delta \p^\epsi$ defined in \eqref{PST_defp} approaches
$\Delta \p$ as given by the King-Smith and Vanderbilt formula \eqref{PST_KSV0} with
errors smaller than any power of $\epsi$, whenever the latter is well defined. More
precisely we show that under suitable technical conditions on $V_\Gamma(t)$ the
average polarization  is well defined and that for any $N\in\N$
\begin{equation}\label{PST_KSV1}
\Delta \p^\epsi =  -\frac{1}{(2\pi)^d} \int_{0}^{T} \!\!\!\D t \int_{M^*} \D k \,\,
\Theta (k,t) + \mathcal{O}(\epsi^{N})\, ,
\end{equation}
where
\begin{equation}\label{PST_theta}
\Theta(k,t):=-\I \,\tr \left( P(k,t)\,[\partial_t P(k,t),\,\nabla_k P(k,t)\,]\,\right)\,,
\end{equation}
and $P(k,t)$ is the Bloch-Floquet fiber decomposition of the spectral projector
$P(t)={\bf 1}_{(-\infty,E(t)]}(H_{\rm PE}(t))$.  Whenever all Bloch bands within
Ran$P(k,t)$ are isolated, the explicit term in \eqref{PST_KSV1} agrees with
\eqref{PST_KSV0}. Note however that \eqref{PST_KSV1} is more general, since it can
be applied also to situations where band crossings occur within the set of occupied
bands.

From the point of view of adiabatic approximation, this result is
actually quite simple, since one just needs the standard
time-adiabatic theory. At time $t=0$  the state $\rho(0)$ is just
the projection $P(0)$ onto the subspace of the isolated group of
occupied bands. Since these bands remain isolated during time
evolution, this subspace is adiabatically preserved  according to
the original adiabatic theorem of Kato \cite{26}, i.e.\
\[
\rho^\epsi(t) = P(t) + \mathcal{O}(\epsi)\,,
\]
and one can compute the higher order corrections to $\rho^\epsi(t)$ using the higher
order time-adiabatic approximation due to   Nenciu \cite{Nenciu1993}.  In order to
get explicit results, one has to do the adiabatic approximation for each fixed $k\in
M^*$ separately. This is possible since $H_{\rm PE}(t,k)$ is still fibered in $k$,
due to translation invariance with respect to a time-independent lattice. However,
since we need to differentiate with respect to $k$ in order to compute the current,
as suggested by formula \eqref{PST_theta}, the expansion needs to be done uniformly
on spaces of suitable equivariant functions. This makes the proof more technical
than expected at first sight.

Alternatively one can derive also for $H_{\rm PE}(t)$   the
semiclassical equations of motion including first order corrections:
\begin{equation}
\label{PST_sceqPE} \left\{
                \begin{array}{lcl}
                    \dot q &=& \nabla_k E_n(k,t) -  \epsi\, \Theta_n(k,t), \\[2mm]

                    \dot k &=& 0\,.
                \end{array}
\right.
\end{equation}
And again averaging the velocity over the first Brillouin zone yields the correct
quantum mechanical average current that is the contribution from the $n$-th band.

Note the striking similarity between the semiclassical corrections in
\eqref{PST_theta} and the electromagnetic field. If we define the geometric vector
potential
$$
\A_n(k,t)= \I \langle \varphi_n(k,t) , \nabla_k \varphi_n  (k,t) \rangle_{L^2(M)},
$$
and the geometric scalar potential
$$
\phi_n(k,t) = - \I \langle \varphi_n(k,t) , \partial_t \varphi_n (k,t)
\rangle_{L^2(M)},
$$
in terms of the Bloch function $\varphi_n(k,t)$ of some isolated
band, then in complete analogy to the electromagnetic fields we have
\begin{equation}\label{PST_Piezocurvature2}
\Theta_n(k,t)= -\partial_t \A_n(k,t) - \nabla_k \phi_n(k,t),
\end{equation}
 and
\begin{equation}
\Omega_n(k,t) = \nabla_k \wedge \A_n(k,t)\,.
\end{equation}

Time-dependent deformations of a crystal generically also lead to a time-dependent
periodicity lattice $\Gamma(t)$, see (\ref{PST_1.6'}). This more general situation
is considered in \cite{28,31}. Now the lattice momentum $k$ is no longer a conserved
quantity and the full space-adiabatic perturbation theory is required in order to
compute the corresponding piezoelectric current. As a result an additional term
appears in the semiclassical equations of motion, reflecting the deformation of the
lattice of periodicity.

\bigskip

\noindent {\bf Acknowlegdments.} We thank Ulrich Mauthner, Max Lein, and Christof Sparber for most informative discussions. This work has been supported by the DFG Priority
Program 1095 ``Analysis, Modeling and Simulation of Multiscale Problems'' under
Sp 181/16-3.

\def\cprime{$'$}




\end{document}